\newenvironment{proof}{{\em Proof.} }{\hfill$\Box$\vspace{0.1in}}
\newtheorem{lemma}{Lemma}[section]
\newtheorem{theorem}{Theorem}[section]
\newtheorem{thm}{Theorem}[section]
\newtheorem{lem}[thm]{Lemma}
\newcommand{\head}[1]
{\markright{\hbox to 0pt{\vtop to 0pt{\hbox{}\vskip 3mm \hrule width
\textwidth \vss} \hss}{\sc #1}}}
\begin{document}
\title{On the NP-hardness of scheduling with time restrictions}
\author{An
Zhang\thanks{Department of Mathematics,
Hangzhou Dianzi University, Hangzhou 310018, P. R. China. Supported
by the Zhejiang Provincial Natural Science Foundation of China (LY16A010015)} \and Yong Chen\thanks{Department of Mathematics, Hangzhou
Dianzi University, Hangzhou 310018, P. R. China. Supported by the
National Natural Science Foundation of China (11401149).} \and Lin Chen\thanks{Institute for Computer Science and Control, Hungarian Academy of Science (MTA SZTAKI), Budapest, Hungary.} \and Guangting Chen\thanks{Corresponding author. Taizhou University,
Linhai 317000, Zhejiang, China.  Supported
by the National Natural Science Foundation of China (11571252). {\tt gtchen@hdu.edu.cn} } }
\date{}
\maketitle \baselineskip 19pt

\begin{abstract}
In a recent paper, Braun, Chung and Graham \cite{BCG14} have addressed a single-processor scheduling problem with time
restrictions. Given a fixed integer $B\geq 2$, there is a set of jobs to be processed by a single processor subject
to the following $B$-constraint. For any real $x$, no unit
time interval $[x, x+1)$ is allowed to intersect more than $B$ jobs.
The problem has been shown to be NP-hard when $B$ is part of the input and left as an open question whether it remains NP-hard or not if $B$ is fixed \cite{BCG14,BCG16,ZYCC16}. This paper contributes to answering this question that we prove the problem is NP-hard even when $B=2$. A PTAS is also presented for any constant $B\geq 2$.

\vskip 2mm\noindent\textbf{Keywords:} Single-processor scheduling;
Time restrictions; NP-hardness \vskip
2mm\noindent\textbf{Mathematics Subject Classification(2000):}
90B35, 90C27
\end{abstract}

\section{Introduction}
Recently, Braun, Chung and Graham \cite{BCG14} have addressed a new single processor problem, namely \textit{scheduling with time restrictions}. Given a set of jobs and a single processor, the problem is therefore to schedule jobs sequentially on the processor so that the makespan of the schedule is minimized and the following $B$-constraint is satisfied. For any real $x$, no unit time interval $[x, x+1)$ is allowed to intersect more than $B$ jobs, where $B\geq 2$ is a given integer. Generally, it takes a semi-open time interval of length $s_i$ on the processor to finish a job with execution time $s_i$ and the processor can handle at most one job at a time. However, it is of interest to allow jobs with zero execution time in this problem, since some of these \textit{zero-jobs} could be processed at the same point in time, which makes the $B$-constraint more challenging.

Different from those existed models in the literature \cite{L04,BLK83,EOO13}, one can comprehend the $B$-constraint as a new type of resource constraint in scheduling problems. Suppose there are $B$ units of discrete and renewable resources, and each job requires not only the processor but also one unit of such resources for processing. The resource will be used up once the job is completed and it can be renewed in one unit time. With these settings, no unit time interval can admit more than $B$ jobs. Therefore, it turns out to be a new type of resource constraint. In \cite{ZYCC16}, Zhang et al further observed a scenario where the problem of scheduling with time restrictions might happen in Operating Room Scheduling (ORS) system of a hospital.

Benmansour, Braun and Artiba \cite{BBA14} formulated the single processor problem to a MILP model and tested the performance by running it on randomly
generated instances. Braun, Chung and Graham \cite{BCG14} showed that any reasonable schedule $T$ must have $C^T\leq (2-\frac{1}{B-1})C^*+3$ if $B\geq 3$ and
$C^T\leq \frac{4}{3}C^*+1$ if $B=2$, where $C^T$ and $C^*$ denote the makespan of $T$ and the optimal schedule respectively. In \cite{BCG16}, it is further improved to $C^T\leq (2-\frac{1}{B-1})C^*+\frac{B}{B-1}$ for any $B\geq 3$. If we produce the schedule $T$ by the LPT algorithm (in which jobs are arranged by non-increasing order of their execution times), then it is shown that $C^T\leq (2-\frac{2}{B})C^*+1$ for any $B\geq 2$ and the bound is best possible \cite{BCG16}. In \cite{ZYCC16}, improved algorithms are presented so that the generated schedule $T$ always satisfies that $C^T\leq C^*+\frac{1}{2}$ if $B=2$, $C^T\leq \frac{B}{B-1}C^*+2$ if $B=3, 4$ and $C^T\leq \frac{5}{4}C^*+2$ if $B\geq 5$. It is worth noting that though many algorithmic results have been acquired, little is known on the complexity of this problem. By now, it is only known that the single processor problem is NP-hard if $B$ is part of the input \cite{BCG14}. It has been left as an open question in \cite{BCG14,BCG16,ZYCC16} whether the problem is NP-hard or not if $B\geq 2$ is a fixed integer.

In this paper, we solve the open question by showing that the single processor problem is NP-hard even when $B=2$. Also we provide a PTAS for any constant $B\geq 2$. The rest of the paper is organized as follows. In Section 2, we formally describe the problem and give several useful lemmas. Section 3 is dedicated to the NP-hardness proof. In Section 4, we give the PTAS. Finally, some conclusion is made in Section 4.

\section{Problem description}
We are given a set $S=\{S_1, S_2, \cdots, S_n\}$ of jobs to be scheduled on a single processor. The execution time of job $S_i$ is denoted by $s_i\geq 0$, which has to be worked on during a semi-open time interval $[\alpha, \alpha+s_i)$ for some
$\alpha\geq 0$. Other than the zero-jobs, at most one job can be handled by the processor at any time point. Also, the $B$-constraint must be guaranteed during the process. The goal is to minimize the makespan, i.e., the latest completion time of the jobs. Without loss of generality, assume that $s_i\leq 1$ for each $i=1, 2, \cdots, n$.

Due to the $B$-constraint, a reasonable schedule on the single processor could be produced in the following way \cite{BCG14}.
Firstly, choose a permutation of $S$, say $T=(S_{[1]}, S_{[2]}, \cdots,
S_{[n]})$, then place the jobs one by one on the real line as early as
possible, provided that the $B$-constraint can never be violated.
More precisely, we start the first job $S_{[1]}$ at time $0$ and observe that
if some job $S_{[i]}$ finishes at time $\tau$ and
$\sum_{j=1}^{B-1}s_{[i+j]}\leq 1$, then the job $S_{[i+B]}$ cannot start
until time $\tau+1$. Let $C_{[i]}$ be the completion time of job $S_{[i]}$, then it follows that
\begin{equation}\label{eq1}
C_{[i]}=\left\{
  \begin{array}{ll}
    \sum_{j=1}^is_{[j]}, & \hbox{if $1\leq i\leq B$;} \\
    \max\{C_{[i-B]}+1, C_{[i-1]}\}+s_{[i]}, & \hbox{if $B<i\leq n$.}
  \end{array}
\right.
\end{equation}
We assume that there are at least $B+1$ jobs, whereas $C^T\geq 1$ for any permutation $T$. Let $d(i:j)$ be the length of the time interval between the completion time of job
$S_{[i]}$ and the start time of $S_{[j]}$ (see Figure 1), where $j>i$. Note that $d(i:i+1)$ equals exactly the length of idle times between $S_{[i]}$ and $S_{[i+1]}$. Let's simply denote by $d(i)=d(i:i+1)$, clearly we have the following observation.
\begin{lem}\label{prop0}
A permutation $T=(S_{[1]}, S_{[2]}, \cdots,
S_{[n]})$ is feasible if and only if for any $1\leq i\leq n-B$, $d(i:i+B)=d(i)+\sum_{j=i+1}^{i+B-1}(s_{[j]}+d(j))\geq 1$.
\end{lem}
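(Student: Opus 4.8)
The plan is to show that, for the layout associated with $T$ (the jobs placed along the real line in the order $S_{[1]},S_{[2]},\dots,S_{[n]}$, with start times $t_{[1]}\le t_{[2]}\le\cdots$ and completion times $C_{[1]}\le C_{[2]}\le\cdots$, and with $t_{[i+1]}\ge C_{[i]}$ so consecutive jobs do not overlap), the $B$-constraint is equivalent to the stated family of inequalities. The key idea is a \emph{contiguity} observation: if a unit interval $[x,x+1)$ intersects two jobs $S_{[a]}$ and $S_{[b]}$ with $a<b$, then it intersects every $S_{[c]}$ with $a\le c\le b$, since $S_{[c]}$ lies between $S_{[a]}$ and $S_{[b]}$ on the line. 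Hence the set of jobs met by any fixed unit interval is a block of consecutive indices, and the $B$-constraint is violated iff some unit interval meets at least $B+1$ consecutive jobs, equivalently (passing to a sub-block) iff some unit interval meets exactly $B+1$ consecutive jobs $S_{[i]},S_{[i+1]},\dots,S_{[i+B]}$.

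Next I would turn ``meets the block $S_{[i]},\dots,S_{[i+B]}$'' into a numerical condition. By contiguity a unit interval $[x,x+1)$ meets this block iff it meets its first job $S_{[i]}$ and its last job $S_{[i+B]}$; meeting $S_{[i]}$ forces $x<C_{[i]}$ and meeting $S_{[i+B]}$ forces $x+1>t_{[i+B]}$, whence $t_{[i+B]}-C_{[i]}<1$. Conversely, if $t_{[i+B]}-C_{[i]}<1$ then any $x$ in the nonempty interval $(t_{[i+B]}-1,\,C_{[i]})$ works. Therefore $T$ is feasible iff $t_{[i+B]}-C_{[i]}\ge 1$ for every $1\le i\le n-B$, and since $d(i:i+B)=t_{[i+B]}-C_{[i]}$ by definition, this is exactly the asserted inequality.

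Finally, the displayed identity is pure bookkeeping: moving from $C_{[i]}$ to $t_{[i+B]}$ one traverses, in order, the idle gap $d(i)$, then job $S_{[i+1]}$ of length $s_{[i+1]}$, then the idle gap $d(i+1)$, and so on, up to job $S_{[i+B-1]}$ and then the idle gap $d(i+B-1)$, so
\[
d(i:i+B)=d(i)+\sum_{j=i+1}^{i+B-1}\bigl(s_{[j]}+d(j)\bigr),
\]
using $d(j)=d(j:j+1)$. The only point that needs genuine care is the contiguity step in the presence of \emph{zero-jobs}: since several zero-jobs may occupy a single time point and a unit interval is half-open, one must agree that a zero-job located at time $\alpha$ is met by $[x,x+1)$ precisely when $\alpha\in[x,x+1)$, and then the boundary subcases $s_{[i]}=0$ or $s_{[i+B]}=0$ in the argument above have to be handled separately (they only shift a strict inequality to a non-strict one and back, and do not affect the conclusion). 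Everything else follows directly from \eqref{eq1} and the definitions.
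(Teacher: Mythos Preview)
Your argument is correct. The paper itself does not provide a proof of this lemma; it is stated as an ``observation'' with the phrase ``clearly we have the following observation,'' so there is nothing to compare against. Your write-up supplies exactly the details the paper suppresses: the contiguity of the index set met by any unit interval, the reduction of a violation to a block of $B+1$ consecutive jobs, the equivalence of ``$[x,x+1)$ meets both $S_{[i]}$ and $S_{[i+B]}$'' with $t_{[i+B]}-C_{[i]}<1$, and the telescoping identity for $d(i:i+B)$. Your remark about zero-jobs is appropriate and the boundary subcases indeed only switch between strict and non-strict inequalities without affecting the threshold $d(i:i+B)\ge 1$.
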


\begin{center}
\begin{figure*}[h]
\centering
\includegraphics[width=1.0\textwidth]{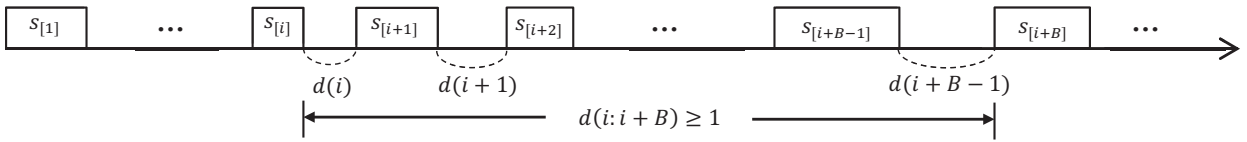}
\caption{The definition of $d(i:j)$.}
\end{figure*}
\end{center}

By the interchange method between jobs and the symmetrical characteristic of any permutation, we can get another observation.
\begin{lem}\label{prop1}
There exists an optimal permutation in which the first and the last jobs are exactly the two jobs with the first and the second smallest execution times.
\end{lem}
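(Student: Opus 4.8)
The plan is to rewrite the makespan of a permutation as a longest path in an auxiliary acyclic digraph and then push the two smallest jobs to positions $1$ and $n$ by a one‑swap interchange argument. First I would record the following consequence of \eqref{eq1}. For a permutation $T=(S_{[1]},\dots,S_{[n]})$ let $G_T$ be the acyclic digraph on vertices $0,1,\dots,n$ with an arc $(i,i+1)$ of weight $s_{[i+1]}$ for each $0\le i\le n-1$ and an arc $(i,i+B)$ of weight $1+s_{[i+B]}$ for each $1\le i\le n-B$; note there is no arc out of $0$ of the second type, which is exactly why the first case of \eqref{eq1} carries no ``$+1$'', while the second case of \eqref{eq1} records the two arcs $(i-1,i)$ and $(i-B,i)$ entering a vertex $i>B$. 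An easy induction on $i$ using \eqref{eq1} then shows that $C^T$ equals the weight of a longest $0$--$n$ path in $G_T$. Two features will be used: (a) the only arc leaving vertex $0$ is $(0,1)$, so \emph{every} $0$--$n$ path visits vertex $1$ (and trivially also vertex $n$); (b) the weight of a $0$--$n$ path $P$ equals $\sum_{v\in P,\ v\ge 1}s_{[v]}+q(P)$, where $q(P)$ is the number of arcs of the second type on $P$, because each visited vertex $v\ge 1$ contributes $s_{[v]}$ no matter how it is entered and each arc of the second type contributes an additional $1$.

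Next, the interchange step. Fix a permutation $T$ and a position $p$ with $2\le p\le n-1$, let $e\in\{1,n\}$, and suppose the job in position $p$ of $T$ has execution time no larger than the job in position $e$. Let $T'$ be obtained from $T$ by swapping these two jobs. Then $G_{T'}=G_T$ as unweighted digraphs; only the arc weights at the two vertices $e$ and $p$ change. Writing $x\ge y$ for the two swapped execution times and using (a) (vertex $e$ lies on every $0$--$n$ path) together with (b), the weight of an arbitrary $0$--$n$ path $P$ changes by $0$ if $p\in P$ and by $-(x-y)\le 0$ if $p\notin P$. Hence the weight of every path weakly decreases, so $C^{T'}\le C^T$. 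This is precisely ``interchange plus symmetry'': positions $1$ and $n$ are the two positions lying on every path, so moving a not‑larger job into either of them cannot increase the makespan.

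To conclude, among all optimal permutations choose one, $T$, minimizing $s_{[1]}+s_{[n]}$, and suppose toward a contradiction that $\{s_{[1]},s_{[n]}\}$ is not the multiset of the two smallest execution times. Let $M=\max\{s_{[1]},s_{[n]}\}$. If every one of the $n-2$ interior jobs had execution time $\ge M$, then --- since also $s_{[1]}\le M$ and $s_{[n]}\le M$ --- the two smallest execution times would be $s_{[1]}$ and $s_{[n]}$, contrary to assumption; so some interior position $p$ holds a job with execution time $<M$. Swapping that job with whichever of $S_{[1]},S_{[n]}$ has execution time $M$, the interchange step gives a permutation $T'$ with $C^{T'}\le C^T$, hence still optimal, yet with $s'_{[1]}+s'_{[n]}<s_{[1]}+s_{[n]}$, contradicting the choice of $T$. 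Therefore in $T$ the first and last jobs carry the two smallest execution times, as claimed.

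The main obstacle is the first paragraph: one must verify carefully from \eqref{eq1} that the completion times it defines are the smallest ones compatible with all precedence constraints $C_{[i]}\ge C_{[i-1]}+s_{[i]}$ and all $B$‑constraints $C_{[i]}\ge C_{[i-B]}+1+s_{[i]}$ for $i>B$, and, crucially, that no $B$‑constraint is imposed for $i\le B$ --- it is this asymmetry at the front of the permutation that makes vertex $1$ lie on every path and drives the whole argument. Granting the longest‑path identity, the interchange step and the potential‑function argument are both short.
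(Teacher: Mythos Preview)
Your proof is correct. The paper itself offers no detailed argument for this lemma --- only the sentence ``By the interchange method between jobs and the symmetrical characteristic of any permutation'' --- so your write-up is far more explicit than the original. Your longest-path reformulation of \eqref{eq1} is a clean device: once $C^T$ is the length of a longest $0$--$n$ path and every such path is forced through vertices $1$ and $n$, the one-swap interchange becomes a one-line computation on path weights. The paper's phrase ``symmetrical characteristic'' most likely refers instead to the time-reversal fact $C^T=C^{T^R}$ (reflecting a feasible schedule in time gives a feasible schedule for the reversed permutation with the same makespan), which lets one argue position~$1$ by interchange and then transfer the conclusion to position~$n$ by symmetry. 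Your digraph handles both endpoints uniformly without invoking that reversal, which is arguably tidier; the two routes are otherwise the same idea.
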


Here we introduce a technical result that will be used later.

\begin{lem}\label{techprop}
Given two sequences of $m$ real numbers, $\mathcal{X}=(x_1, x_2, \cdots, x_m)$ and $\mathcal{Y}=(y_1, y_2, \cdots, y_m)$,  such that:

(1) $x_1\leq x_2 \leq \cdots \leq x_m$ and $y_1\geq y_2 \geq \cdots \geq y_m$.

(2) $\sum_{j=1}^mx_j\leq \sum_{j=1}^m y_j$.

\noindent Then for any $1\leq i\leq m$, $\sum_{j=1}^ix_j\leq \sum_{j=1}^i y_j$.
\end{lem}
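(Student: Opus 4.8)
The statement is a discrete ``dominance propagates to prefixes'' fact, and the cleanest route is a short proof by contradiction that plays the prefix off against the complementary suffix.

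The plan is as follows. Fix $i$ with $1\le i\le m$. If $i=m$ the claim is exactly hypothesis (2), so assume $i<m$ and, for contradiction, that $\sum_{j=1}^i x_j>\sum_{j=1}^i y_j$. The first step is to compare block averages. Because $x_1\le\cdots\le x_m$, every term of $\mathcal{X}$ indexed by $j>i$ is at least $x_i$, while the average of the first $i$ terms is at most $x_i$; hence $\frac{1}{m-i}\sum_{j=i+1}^m x_j\ge x_i\ge \frac{1}{i}\sum_{j=1}^i x_j$. Symmetrically, since $y_1\ge\cdots\ge y_m$, one gets $\frac{1}{m-i}\sum_{j=i+1}^m y_j\le y_i\le \frac{1}{i}\sum_{j=1}^i y_j$. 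The second step is to chain these with the contradiction hypothesis $\frac{1}{i}\sum_{j=1}^i x_j>\frac{1}{i}\sum_{j=1}^i y_j$, which gives $\frac{1}{m-i}\sum_{j=i+1}^m x_j>\frac{1}{m-i}\sum_{j=i+1}^m y_j$, i.e.\ $\sum_{j=i+1}^m x_j>\sum_{j=i+1}^m y_j$. Adding this to $\sum_{j=1}^i x_j>\sum_{j=1}^i y_j$ yields $\sum_{j=1}^m x_j>\sum_{j=1}^m y_j$, contradicting (2), and the proof is complete.

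There is essentially no obstacle here; the only points that need care are the trivial boundary case $i=m$, the sign of $m-i$ when multiplying the average inequalities through (it is positive precisely because $i<m$), and bookkeeping of which inequalities are strict.

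As an alternative, one can phrase the same idea via concavity: put $f(i)=\sum_{j=1}^i(y_j-x_j)$ with $f(0)=0$. The increments $y_i-x_i$ are nonincreasing (a nonincreasing sequence minus a nondecreasing one), so $f$ is discretely concave on $\{0,1,\dots,m\}$; a concave function with $f(0)=0$ and $f(m)\ge 0$ satisfies $f(i)\ge \frac{i}{m}f(m)\ge 0$ for every $i$, which is exactly the assertion. I would present the averaging argument, since it is completely self-contained.
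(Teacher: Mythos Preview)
Your argument is correct. Both your contradiction route and your concavity alternative are valid.

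The paper uses the same underlying averaging idea but arranges it more directly: rather than comparing the prefix average to the suffix average and arguing by contradiction, it compares each prefix average to the \emph{global} average. Monotonicity of $\mathcal{X}$ gives $\frac{1}{i}\sum_{j\le i}x_j\le \frac{1}{m}\sum_{j\le m}x_j$, hypothesis (2) gives $\frac{1}{m}\sum_{j\le m}x_j\le \frac{1}{m}\sum_{j\le m}y_j$, and monotonicity of $\mathcal{Y}$ gives $\frac{1}{m}\sum_{j\le m}y_j\le \frac{1}{i}\sum_{j\le i}y_j$; multiplying through by $i$ finishes in one line. This avoids the case split on $i=m$, the contradiction setup, and the bookkeeping on strictness and the sign of $m-i$. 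Your concavity reformulation is a genuinely different packaging and is arguably the most conceptual of the three, since it explains \emph{why} the prefix inequality holds (the prefix-sum difference sits above its secant), whereas both averaging arguments simply verify it.
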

\begin{proof}
By (1), (2) and the fact that the average of the $i$ smallest (largest) number cannot be larger (smaller) than the average of all numbers, we can deduce that
$$
\frac{1}{i}\sum_{j=1}^ix_j\leq \frac{1}{m}\sum_{j=1}^mx_j\leq \frac{1}{m}\sum_{j=1}^m y_j\leq \frac{1}{i}\sum_{j=1}^i y_j.
$$
Hence we have $\sum_{j=1}^ix_j\leq \sum_{j=1}^i y_j$,  for any $1\leq i\leq m$.
\end{proof}

%\begin{proof}
%Since any permutation has the same makespan with the inverted one, we only need to prove that the smallest job can be moved to the beginning of a permutation without increasing the %makespan. In fact, suppose there is an optimal permutation $T^*=(T^*_1, T^*_2, \cdots, T^*_n)$ such that the smallest job is $T^*_k$ rather than $T^*_1$, i.e., %$t^*_k=\min_{i=1,2,\cdots, n}\{t_i\}<t^*_1$, then by directly interchanging $T_1^*$ with $T_k^*$, we get a new permutation $T'=(T^*_k, T^*_2, \cdots, T^*_{k-1}, T_1^*, \cdots %T^*_n)$. A feasible schedule with respect to $T'$ can be produced from $T^*$ in the following way. Firstly we shorten $T^*_1$ in $T^*$ by $t_1^*-t_k^*$ and shift each job $T_j^*$ %with $2\leq j\leq k$ to .$C^{T'}\leq C^*$
%\end{proof}

\section{NP-hardness of the single-processor problem}
For convenience, let us enlarge any unit time interval to a time interval with length $U\geq 1$, where $U$ is a given integer. Consequently, an equivalent definition of the $B$-constraints could be the following. For any real $x$, no
time interval $[x, x+U)$ is allowed to intersect more than $B$ jobs. The main result of this section is as follows.
\begin{thm}\label{main}
The single-processor problem with time restrictions is NP-hard even when $B=2$.
\end{thm}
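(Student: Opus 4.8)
The plan is to reduce from PARTITION. Given positive integers $a_1,\dots,a_k$ with $\sum_{i=1}^k a_i=2A$, I would build a $B=2$ instance using the length-$U$ formulation introduced above, with $U:=k+1$. The job set consists of $k$ \emph{number jobs} of execution time $a_i/M$, where $M$ is a fixed integer large enough that $a_i/M\le 1$ (e.g.\ $M=2A$), together with $k+2$ \emph{zero-jobs}; thus $n=2k+2$. The statement to be proved is that the optimal makespan equals $A/M+k(k+1)$ if and only if the PARTITION instance is a yes-instance. Since this threshold is a rational of polynomial bit-size, and the reduction is clearly polynomial, this establishes the theorem; the zero-jobs serve both as the two ``end'' jobs of Lemma~\ref{prop1} and as padding that lets us ignore the fact that the two halves of the instance must have equal cardinality.

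First I would pin down the makespan of a permutation. By Lemma~\ref{prop1} there is an optimal permutation whose first and last jobs are zero-jobs, so it suffices to analyze permutations of this form; there the remaining $2k$ middle jobs occupy positions $2,\dots,n-1$. Unrolling the recursion \eqref{eq1} for $B=2$ (equivalently, minimizing the total idle time subject to the constraints of Lemma~\ref{prop0}, which is a covering LP on a path and whose dual is a max-weight independent set problem on a path), one obtains that the makespan of such a permutation $T$ equals $\sigma+a_n$, where $\sigma=\sum_i s_i=2A/M$ is fixed and $a_n$ is the maximum, over independent sets $I$ of the path $P_{2k}$ on the $2k$ middle positions, of $\sum_{j\in I}(U-s_{[j]})$. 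Because $U=k+1>k$ and every $s_{[j]}\in[0,1]$, any maximum-weight independent set must have the maximum possible cardinality $k$; hence $a_n=kU-\mu(T)$, where $\mu(T)$ is the \emph{minimum} of $\sum_{j\in I}s_{[j]}$ over all maximum-size independent sets $I$ of $P_{2k}$. So minimizing the makespan over permutations amounts to maximizing $\mu(T)$ over arrangements of the $2k$ middle jobs on $P_{2k}$.

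Next I would use the elementary fact that the maximum-size independent sets of $P_{2k}$ are exactly the $k+1$ sets $I_0,\dots,I_k$, where $I_t$ consists of the odd-indexed vertices $1,3,\dots,2t-1$ together with the even-indexed vertices $2t+2,2t+4,\dots,2k$; consecutive $I_t$ differ by a single vertex swap, and $I_0,I_k$ partition all $2k$ vertices, so $\sum_{j\in I_0}s_{[j]}+\sum_{j\in I_k}s_{[j]}=\sigma$. This gives the upper bound $\mu(T)\le\tfrac12\sigma=A/M$ for every arrangement, with equality forcing the jobs at odd positions and the jobs at even positions to have equal total $A/M$; since zero-jobs contribute nothing, that means some sub-multiset of the $a_i$ sums to $A$, i.e.\ PARTITION is a yes-instance. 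For the converse, given a PARTITION solution I would place the corresponding number jobs (padded with zero-jobs to $k$ jobs) in non-increasing order on the odd vertices and the complementary group in non-decreasing order on the even vertices, both groups having total $A/M$; Lemma~\ref{techprop} applied to these sorted sequences then yields $\sum_{j\in I_t}s_{[j]}\ge\sum_{j\in I_0}s_{[j]}=A/M$ for every $t$, so $\mu(T)=A/M$ is attained. Combining the two directions, the optimal makespan is $\sigma+kU-A/M=A/M+k(k+1)$ exactly for yes-instances, and for no-instances every split of the $a_i$ is off by at least one unit, so the optimal makespan is at least $A/M+k(k+1)+1/M$.

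The main obstacle I anticipate is the first step: carefully deriving from \eqref{eq1} the closed form ``makespan $=\sigma+{}$(max-weight independent set of $P_{2k}$ with weights $U-s_{[j]}$)'' and verifying the cardinality claim, after which the partition argument is clean. Handling the boundary positions $1,2$ and $n-1,n$ correctly in the recursion, and making sure that the many zero-jobs (which carry the largest possible vertex weight $U$) do not disturb the counting, are the delicate points; but since the lower bound on $\mu(T)$ in the converse is obtained directly through Lemma~\ref{techprop} rather than by identifying which independent set is heaviest, the zero-jobs cause no real difficulty.
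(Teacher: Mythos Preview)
Your argument is correct and takes a genuinely different route from the paper's.

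The paper reduces from the \emph{cardinality constrained} partition problem: given $2m$ integers, decide whether they split into two sets of size $m$ each with equal sum. Its instance has $2m+3$ jobs---the $2m$ number jobs, two zero-jobs, and one large job of length $U=\tfrac12\sum a_j$---with threshold $(m+2)U$. For the ``if'' direction the paper computes all completion times $C_{[i]}$ explicitly by induction on the recursion~\eqref{eq1}, using Lemma~\ref{techprop} exactly as you do; for ``only if'' it extracts from Lemma~\ref{prop0} two lower bounds on $C^T$, one along the odd positions and one along the even positions, and observes that their average is already $(m+2)U$, forcing equality and hence an equal-size, equal-sum split.

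Your version differs in three ways. First, you reduce from ordinary \textsc{Partition}, absorbing the missing cardinality constraint with $k$ extra zero-jobs used as padding; this is a genuine simplification of the hypothesis. Second, instead of a single large job of length $U$ you keep all jobs small and set $U=k+1$ externally, relying on $U>k$ to force every maximum-weight independent set to have full size $k$. Third---and this is the most interesting difference---you recast the $B=2$ makespan recursion as a max-weight independent set problem on $P_{2k}$ and then exploit the explicit list $I_0,\dots,I_k$ of all $k+1$ maximum independent sets. The paper's two lower bounds are exactly your sets $I_0$ and $I_k$; your formulation explains \emph{why} those two already pin down the lower bound and why the sorted placement together with Lemma~\ref{techprop} simultaneously controls all the intermediate $I_t$ in the upper bound. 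The paper's approach is shorter and more direct; yours yields more structural insight and works from a slightly weaker starting problem.
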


We prove it by reducing the cardinality constrained partition problem, which is known to be NP-complete \cite{GTW88}, in a polynomial time into our problem.

\textbf{Cardinality Constrained Partition Problem:}

Given a set $\mathcal{N}=\{a_1, a_2, \cdots, a_{2m}\}$ of positive integers, does there exist a partition of $\mathcal{N}$ into two disjoint subsets $\mathcal{N}_1$ and $\mathcal{N}_2$ such that $\sum_{a_j\in \mathcal{N}_1}a_j=\sum_{a_j\in \mathcal{N}_2}a_j$ and that $|\mathcal{N}_1|=|\mathcal{N}_2|=m$?

The corresponding single-processor problem can be constructed as follows.

\noindent Basic settings: $B=2$; $U=\frac{1}{2}\sum_{a_j\in \mathcal{N}}a_j$.

\noindent Number of jobs: $n=2m+3$.

\noindent Execution times: $s_j=a_j, j=1, 2, \cdots, 2m$; $s_{2m+1}=s_{2m+2}=0$; $s_{2m+3}=U$.

\noindent Threshold of the makespan: $y=(m+2)U$.

\noindent Question: does there exist a permutation of the single-processor problem such that the makespan is no more than $y$?

\begin{lem}\label{ifpart}
If there exists a solution to the cardinality constrained partition problem, then there exists a permutation for the single-processor problem with makespan no more than $y$.
\end{lem}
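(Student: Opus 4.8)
The plan is to use a solution $\mathcal{N}_1,\mathcal{N}_2$ of the cardinality constrained partition (so $|\mathcal{N}_1|=|\mathcal{N}_2|=m$ and $\sum_{a\in\mathcal{N}_1}a=\sum_{a\in\mathcal{N}_2}a=U$) to construct one explicit permutation whose induced schedule, obtained from \eqref{eq1} with the unit length replaced by $U$, has makespan exactly $(m+2)U=y$. Write $\mathcal{N}_1=(b_1\ge b_2\ge\cdots\ge b_m)$ in non-increasing order and $\mathcal{N}_2=(c_1\le c_2\le\cdots\le c_m)$ in non-decreasing order, and let $S_{b_j}$ (resp.\ $S_{c_j}$) denote the job of $\mathcal{N}_1$ (resp.\ $\mathcal{N}_2$) with execution time $b_j$ (resp.\ $c_j$). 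I would then take
$$
T=\bigl(S_{2m+1},\,S_{2m+3},\,S_{b_1},\,S_{c_1},\,S_{b_2},\,S_{c_2},\,\ldots,\,S_{b_m},\,S_{c_m},\,S_{2m+2}\bigr),
$$
so that the odd positions $1,3,\ldots,2m+3$ carry the execution times $(0,b_1,b_2,\ldots,b_m,0)$ and the even positions $2,4,\ldots,2m+2$ carry $(U,c_1,c_2,\ldots,c_m)$. Feasibility of $T$ costs nothing: with $B=2$, rule \eqref{eq1} starts every job $S_{[i]}$ with $i>2$ no earlier than $C_{[i-2]}+U$, so $d(i:i+2)\ge U$ for all $i$ and Lemma~\ref{prop0} is satisfied.

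The heart of the argument is the evaluation of the completion times, which I would carry out by induction on $k$:
$$
C_{[1]}=0,\qquad C_{[2]}=U,\qquad C_{[2k+1]}=kU+\sum_{j=1}^{k}b_j,\qquad C_{[2k+2]}=(k+1)U+\sum_{j=1}^{k}c_j\quad(1\le k\le m).
$$
The base case $k=1$ is a direct substitution into \eqref{eq1} using only $b_1\le\sum_j b_j=U$. For the inductive step one expands $C_{[2k+3]}=\max\{C_{[2k+1]}+U,\,C_{[2k+2]}\}+b_{k+1}$ and $C_{[2k+4]}=\max\{C_{[2k+2]}+U,\,C_{[2k+3]}\}+c_{k+1}$ and must resolve the two maxima. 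Both reductions need only the trivial bounds $\sum_{j\le k}b_j\le U$ and $\sum_{j\le k}c_j\le U$ (immediate since the full sums equal $U$), together with the partial-sum domination $\sum_{j=1}^{k}c_j\le\sum_{j=1}^{k}b_j$ for every $k$, which is exactly Lemma~\ref{techprop} applied to $\mathcal{X}=(c_1,\ldots,c_m)$ and $\mathcal{Y}=(b_1,\ldots,b_m)$ (one non-decreasing, one non-increasing, equal sums). Granting these, $\max\{C_{[2k+1]}+U,\,C_{[2k+2]}\}=(k+1)U+\sum_{j\le k}b_j$ and $\max\{C_{[2k+2]}+U,\,C_{[2k+3]}\}=(k+2)U+\sum_{j\le k}c_j$, which closes the induction.

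From the closed form one then reads off $C_{[2m+2]}=(m+1)U+\sum_j c_j=(m+2)U$ and, by one further use of \eqref{eq1}, $C_{[2m+3]}=\max\{C_{[2m+1]}+U,\,C_{[2m+2]}\}+0=(m+2)U$, while $C_{[i]}\le(m+2)U$ for all remaining $i$; hence the makespan of $T$ equals $(m+2)U=y$, which proves the lemma. The step I expect to be the main obstacle is this induction: one has to keep straight the bookkeeping of the two interleaved ``chains'' --- the odd positions and the even positions --- and verify at each stage that in \eqref{eq1} the $B$-constraint term $C_{[i-2]}+U$ dominates the non-overlap term $C_{[i-1]}$. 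It is precisely the partial-sum domination of Lemma~\ref{techprop}, together with the trivial bounds by $U$, that makes this true and pins the makespan to the tight value $(m+2)U$; indeed, with $\mathcal{N}_1$ and $\mathcal{N}_2$ ordered the wrong way the maxima can accumulate and the makespan can exceed $y$.
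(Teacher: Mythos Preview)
Your proof is correct and follows essentially the same strategy as the paper: interleave the two halves of the partition with opposite sortings, invoke Lemma~\ref{techprop} for the partial-sum domination $\sum_{j\le k}c_j\le\sum_{j\le k}b_j$, and compute the completion times by induction from~\eqref{eq1}. The only cosmetic difference is the placement of the length-$U$ job $S_{2m+3}$: the paper puts it in the penultimate slot (just before the trailing zero-job), whereas you put it in the second slot (just after the leading zero-job); in effect your permutation is the reverse of the paper's, and both yield makespan exactly $(m+2)U$.
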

\begin{proof}
Let $\mathcal{N}_1$ and $\mathcal{N}_2$ be the solution to the partition problem, we simply denote them by $\mathcal{N}_1=\{a_{[2j-1]}| j=1, 2, \cdots, m\}$ and  $\mathcal{N}_2=\{a_{[2j]}| j=1, 2, \cdots, m\}$, then $\sum_{j=1}^m a_{[2j-1]}=\sum_{j=1}^m a_{[2j]}=U$. Moreover, let $a_{[1]}\geq a_{[3]}\geq\cdots\geq a_{[2m-1]}$ and $a_{[2]}\leq a_{[4]}\leq\cdots\leq a_{[2m]}$. By Lemma \ref{techprop}, we get
\begin{equation}\label{eq2}\sum_{j=1}^i a_{[2j]} \leq \sum_{j=1}^i a_{[2j-1]}\end{equation}
for any $1\leq i\leq m$. Consider the following
permutation (See Figure 2),
$$T=(S_{2m+1}, S_{[1]}, S_{[2]}, \cdots, S_{[2m-1]}, S_{[2m]}, S_{2m+3}, S_{2m+2}),$$
we claim that $C^T\leq y$.
\begin{center}
\begin{figure*}[h]
\centering
\includegraphics[width=1.0\textwidth]{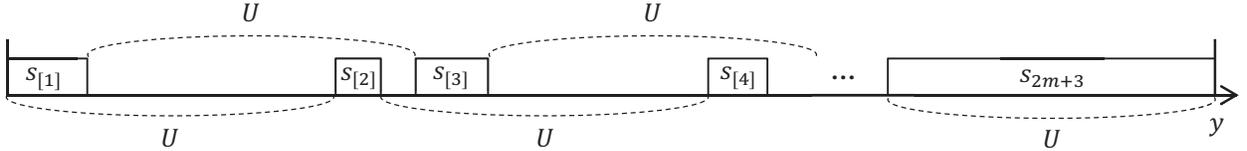}
\caption{Illustration for the schedule generated by $T$.}
\end{figure*}
\end{center}

In fact, by (\ref{eq1}), we have $C_{2m+1}=s_{2m+1}=0, C_{[1]}=s_{2m+1}+s_{[1]}=a_{[1]}$ and subsequently, $C_{[2]}=\max\{C_{2m+1}+U, C_{[1]}\}+s_{[2]}=\max\{U, a_{[1]}\}+a_{[2]}=U+a_{[2]}$. Assume that we already have $C_{[2i-1]}=(i-1)U+\sum_{j=1}^{i}a_{[2j-1]}$ and $C_{[2i]}=iU+\sum_{j=1}^{i}a_{[2j]}$. Then by (\ref{eq1}) and (\ref{eq2}), we can calculate
\begin{eqnarray*}
C_{[2i+1]} & = & \max\{C_{[2i-1]}+U, C_{[2i]}\}+s_{[2i+1]}  \\
    & = & \max\{iU+\sum_{j=1}^{i}a_{[2j-1]}, iU+\sum_{j=1}^{i}a_{[2j]}\}+a_{[2i+1]} \\
    & = & iU+\sum_{j=1}^{i}a_{[2j-1]}+a_{[2i+1]}=iU+\sum_{j=1}^{i+1}a_{[2j-1]},
\end{eqnarray*}
and
\begin{eqnarray*}
C_{[2i+2]} & = & \max\{C_{[2i]}+U, C_{[2i+1]}\}+s_{[2i+2]}  \\
    & = & \max\{(i+1)U+\sum_{j=1}^{i}a_{[2j]}, iU+\sum_{j=1}^{i+1}a_{[2j-1]}\}+a_{[2i+2]} \\
    & = & (i+1)U+\sum_{j=1}^{i+1}a_{[2j]}.
\end{eqnarray*}
Thus it yields that $C_{[2m-1]}=(m-1)U+\sum_{j=1}^{m}a_{[2j-1]}=mU$ and $C_{[2m]}=mU+\sum_{j=1}^{m}a_{[2j]}=(m+1)U$. Consequently, we get
$$C_{2m+3}=\max\{C_{[2m-1]}+U, C_{[2m]}\}+s_{2m+3}=(m+1)U+U=(m+2)U,$$
and hence, $C^T=C_{2m+2}=\max\{C_{[2m]}+U, C_{2m+3}\}+s_{2m+2}=(m+2)U=y$.
\end{proof}

\begin{lem}\label{onlyifpart}
If there exists a permutation for the single-processor problem with makespan no more than $y$, then there exists a solution to the cardinality constrained partition problem.
\end{lem}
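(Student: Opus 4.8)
The plan is to take any feasible permutation $T$ with $C^T \le y = (m+2)U$ and extract from it a balanced $m$-$m$ partition of $\mathcal{N}$. First I would normalize the permutation using Lemma \ref{prop1}: since $s_{2m+1}=s_{2m+2}=0$ are the two smallest execution times, there is an optimal permutation (hence one with makespan $\le y$) whose first and last jobs are $S_{2m+1}$ and $S_{2m+2}$. So write $T = (S_{2m+1}, S_{[1]}, \ldots, S_{[2m+1]}, S_{2m+2})$, where $(S_{[1]}, \ldots, S_{[2m+1]})$ is some ordering of the jobs $S_1,\ldots,S_{2m},S_{2m+3}$. The total processing time of all jobs is $\sum_{j=1}^{2m} a_j + 0 + 0 + U = 2U + U = 3U$, so $C^T \ge 3U$ just from the processing; the slack between $3U$ and $(m+2)U$ is exactly $(m-1)U$, and this must accommodate all the forced idle time dictated by the $B=2$ constraint. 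The strategy is to show that the constraint forces \emph{at least} $(m-1)U$ of idle time, with equality only when the partition is perfectly balanced.

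The key structural step is to locate the big job $S_{2m+3}$ (execution time $U$) and the two zero-jobs. By Lemma \ref{prop0} with $B=2$, feasibility says $d(i)+s_{[i+1]}+d(i+1) \ge U$ for every consecutive triple; equivalently, between the completion of $S_{[i]}$ and the start of $S_{[i+2]}$ at least $U$ time must elapse. I would sum these inequalities over a suitable set of disjoint triples to get a lower bound on $C^T$. Concretely, group the $2m+1$ middle jobs together with the big job into $\approx m$ blocks, each of which must span at least $U$ units of time because of the length-$U$ look-back constraint; since $S_{2m+3}$ alone already occupies a full $U$, and the remaining $2m$ real jobs have total length $2U$, a counting argument on the blocks forces $C^T \ge 3U + (m-1)U$ unless the $a_j$'s split evenly. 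The cleanest route: show that the only way to reach makespan exactly $(m+2)U$ is to have the big job in the next-to-last position (as in Figure 2), the $2m$ real jobs paired up as $(S_{[2j-1]},S_{[2j]})$ with each consecutive pair summing to exactly $U$ after accounting for idle time, which by telescoping forces $\sum_{j=1}^{i} a_{[2j-1]} + \sum_{j=1}^{i} a_{[2j]} = iU$ balanced in the precise sense that $\{a_{[2j-1]}\}$ and $\{a_{[2j]}\}$ each sum to $U$. Reading off $\mathcal{N}_1 = \{a_{[2j-1]}\}$ and $\mathcal{N}_2 = \{a_{[2j]}\}$ then gives the desired partition with $|\mathcal{N}_1| = |\mathcal{N}_2| = m$.

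The main obstacle I anticipate is that the feasible permutation need not have the clean alternating structure of the construction in Lemma \ref{ifpart}: the big job could a priori sit anywhere, the zero-jobs could be interleaved among the $a_j$'s, and idle time could be distributed unevenly. Handling this requires an exchange/normalization argument showing that any makespan-$\le y$ permutation can be transformed, without increasing the makespan, into the canonical form $(S_{2m+1}, \text{pairs}, S_{2m+3}, S_{2m+2})$ — in particular that $S_{2m+3}$ can be moved to the penultimate real slot and that the zero-jobs can be pushed to the two ends. Once the permutation is in canonical form, the telescoping computation mirrors the one in Lemma \ref{ifpart} run in reverse: from $C^T = (m+2)U$ and the recurrence \eqref{eq1}, each $\max$ in the chain must be attained in a way that pins down the partial sums, and integrality of the $a_j$ (together with $U$ being a half-integer multiple of integers, hence the relevant quantities being integers) rules out the degenerate near-balanced cases, yielding an exact equipartition.
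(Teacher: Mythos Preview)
Your setup is right and matches the paper: normalize via Lemma~\ref{prop1} so that the two zero-jobs sit at the ends, and write $T=(S_{2m+1},S_{[1]},\ldots,S_{[2m+1]},S_{2m+2})$. From there, however, you take a detour that the paper avoids entirely, and the detour is where the gap lies.

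The paper's argument does \emph{not} try to locate $S_{2m+3}$ or normalize the permutation into a canonical form. Instead it extracts two lower bounds directly from Lemma~\ref{prop0} with $B=2$: chaining the constraints over even-indexed middle jobs gives
\[
C^T \ \ge\ (m+1)U+\sum_{j=1}^{m}s_{[2j]},
\]
and chaining over odd-indexed middle jobs gives
\[
C^T \ \ge\ mU+\sum_{j=1}^{m+1}s_{[2j-1]}.
\]
Averaging these two inequalities and using $\sum_{j=1}^{2m+1}s_{[j]}=2U+U=3U$ yields $C^T\ge (m+2)U=y$. Since $C^T\le y$, both bounds are equalities; subtracting them gives $\sum_{j=1}^{m+1}s_{[2j-1]}=U+\sum_{j=1}^{m}s_{[2j]}$, which forces the job of size $U$ to lie in an \emph{odd} middle position, and the remaining odd and even middle positions then each sum to $U$, giving the balanced partition.

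Against this, your proposal hinges on an ``exchange/normalization argument'' to push $S_{2m+3}$ to the penultimate slot, which you do not supply. That is the genuine gap: it is not obvious such a move preserves the makespan, and in fact the partition conclusion does not require $S_{2m+3}$ to be penultimate at all---it only needs to be in an odd position, which the averaging argument delivers for free. Your appeal to integrality of the $a_j$ is likewise unnecessary; the paper's proof is purely linear-algebraic and never uses it. Finally, your worry that ``the zero-jobs could be interleaved among the $a_j$'s'' contradicts your own first paragraph, where you already placed them at the ends via Lemma~\ref{prop1}.
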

\begin{proof}
Suppose $T$ is a permutation of $S$ such that $C^T\leq y$. By Lemma \ref{prop1}, we can assume that the two smallest jobs $S_{2m+1}$ and $S_{2m+2}$ are at the opposite ends of $T$. Therefore the permutation can be denoted by
$$T=(S_{2m+1}, S_{[1]}, S_{[2]}, \cdots, S_{[2m+1]}, S_{2m+2}),$$
where $\{S_{[1]}, S_{[2]}, \cdots, S_{[2m+1]}\}=\{S_{1}, S_2, \cdots, S_{2m}\}\cup\{S_{2m+3}\}$. By Lemma \ref{prop0}, we can deduce that
\begin{equation}\label{eq3}
C^T=C_{2m+2}\geq s_{2m+1}+U+s_{[2]}+U+s_{[4]}+\cdots+U+s_{[2m]}+U+s_{2m+2} = (m+1)U+\sum_{j=1}^{m}s_{[2j]}
\end{equation}
and
\begin{equation}\label{eq4}
C^T \geq C_{[2m+1]}+s_{2m+2}\geq s_{2m+1}+s_{[1]}+U+s_{[3]}+\cdots+U+s_{[2m+1]}+s_{2m+2} = mU+\sum_{j=1}^{m+1}s_{[2j-1]}.
\end{equation}
Combining (\ref{eq3}) and (\ref{eq4}), it follows
\begin{eqnarray*}
C^T & \geq & \max\{(m+1)U+\sum_{j=1}^{m}s_{[2j]}, mU+\sum_{j=1}^{m+1}s_{[2j-1]}\}  \\
& \geq & \frac{1}{2}\{(m+1)U+\sum_{j=1}^{m}s_{[2j]}+mU+\sum_{j=1}^{m+1}s_{[2j-1]}\}  \\
& = & \frac{1}{2}\{(2m+1)U+\sum_{j=1}^{2m}a_{j}+s_{2m+3}\}=(m+2)U=y.
\end{eqnarray*}
Since $C^T\leq y$, we must have $C^T=y$, which yields that (\ref{eq3}) and (\ref{eq4}) turn out to be equalities, i.e.,
$C^T=(m+1)U+\sum_{j=1}^{m}s_{[2j]}=mU+\sum_{j=1}^{m+1}s_{[2j-1]}.$ Hence, $\sum_{j=1}^{m+1}s_{[2j-1]}=U+\sum_{j=1}^{m}s_{[2j]}$. Note that $\{s_{[1]}, s_{[2]}, \cdots, s_{[2m+1]}\}=\{s_1, s_2, \cdots, s_{2m}, s_{2m+3}\}=\{a_1, a_2, \cdots, a_{2m}, U\}$ and $\sum_{j=1}^{2m}a_j=2U$, it yields that $U\in\{s_{[1]}, s_{[3]}, \cdots, s_{[2m+1]}\}$, say $s_{[2m+1]}=U$. Thus we have $\sum_{j=1}^ms_{[2j-1]}=\sum_{j=1}^ms_{[2j]}=U$. That is, $\mathcal{N}_1=\{a_{[2j-1]}|j=1, 2,\cdots, m\}, \mathcal{N}_2=\{a_{[2j]}|j=1, 2,\cdots, m\}$ forms a solution to the cardinality constrained partition problem.
\end{proof}

Finally, by Lemma \ref{ifpart} and Lemma \ref{onlyifpart}, Theorem 3.1 follows.

\section{A PTAS for any constant $B$}

The goal of this section is to prove the following theorem.

\begin{theorem}\label{thm:ptas}
There exists a PTAS for the single-processor problem with time restrictions when $B$ is a constant.
\end{theorem}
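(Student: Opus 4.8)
The plan is a rounding‑plus‑dynamic‑programming scheme: reduce the instance to one with only a constant number of distinct job lengths and solve that restricted instance exactly by a dynamic program that stays polynomial for fixed $B$ and $\epsilon$. Two features of the model make this work. First, the makespan defined by the recurrence (\ref{eq1}) is monotone: $C_{[n]}$ is non‑decreasing in each length $s_{[j]}$, while the total idle time $C_{[n]}-\sum_j s_{[j]}$ is non‑increasing in each length — both are immediate inductions on (\ref{eq1}) (recall $C_{[n]}-\sum_j s_{[j]}$ obeys $f(i)=\max\{f(i-B)+1-\sum_{j=i-B+1}^{i-1}s_{[j]},\,f(i-1)\}$ with $f(i)=0$ for $i\le B$). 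Hence rounding lengths down can only decrease the optimum, and rounding them back up afterwards increases the makespan of any fixed permutation by at most the total length increase. Second — and this is what keeps the dynamic program polynomial — if every length is an integer multiple of a fixed $\epsilon'$ of the form $1/q$ with $q\in\mathbb Z$, then an induction on (\ref{eq1}) shows every completion time $C_{[i]}$ is again a multiple of $\epsilon'$; and since $C_{[i]}\le C_{[i-1]}+1+s_{[i]}\le C_{[i-1]}+2$ (using $C_{[i-B]}\le C_{[i-1]}$) we get $C_{[i]}\le 2n$, so all completion times lie in $\{0,\epsilon',2\epsilon',\dots\}\cap[0,2n]$, a set of only $O(n)$ elements.

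In more detail: if $n<2B$, solve the instance by brute force over its $O(1)$ permutations, so assume $n\ge 2B$. Iterating $C_{[i]}\ge C_{[i-B]}+1$ from (\ref{eq1}) gives $C^*\ge n/B-1\ge n/(2B)$. Set $\epsilon'=1/\lceil 2B/\epsilon\rceil$, so $1/\epsilon'\in\mathbb Z$ and $\epsilon'\le\epsilon/(2B)$, and let $I'$ be obtained from the given instance $I$ by replacing each $s_j$ with the largest multiple of $\epsilon'$ not exceeding it; then $I'$ has at most $\kappa:=1/\epsilon'+1$ distinct lengths, a constant. By the monotonicity above, $C^*(I')\le C^*(I)$, and any permutation run on $I$ has makespan at most its makespan on $I'$ plus $\sum_j(s_j-s'_j)\le n\epsilon'$.

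It then remains to compute an optimal permutation of $I'$ in polynomial time. I would use a dynamic program over states $\big((c_1,\dots,c_\kappa);\,C_{[i]},C_{[i-1]},\dots,C_{[i-B+1]}\big)$, where $c_\ell$ records how many jobs of the $\ell$‑th length have been scheduled among the first $i=\sum_\ell c_\ell$, and entries of index $\le 0$ are read as $-\infty$. A transition appends a further job of a still‑available length $\sigma$ and sets the new leading completion time to $\max\{C_{[i+1-B]}+1,\,C_{[i]}\}+\sigma$, exactly as in (\ref{eq1}); every permutation of $I'$ corresponds to exactly one such path, and no feasibility side‑constraint is needed because the greedy schedule of every permutation is feasible — by (\ref{eq1}) the start time of $S_{[i+B]}$ equals $\max\{C_{[i]}+1,C_{[i+B-1]}\}\ge C_{[i]}+1$, so the condition of Lemma~\ref{prop0} always holds. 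By the discretization observation each of the $B$ stored completion times ranges over $O(n)$ values, giving $O(n^{\kappa+B})$ states and $O(\kappa\,n^{\kappa+B})$ transitions, polynomial for fixed $B$ and $\epsilon$. The minimum of $C_{[n]}$ over states with $\sum_\ell c_\ell=n$ is $C^*(I')$, and backtracking yields an optimal permutation $\pi^*$ of $I'$. Applying $\pi^*$ to $I$ produces a feasible schedule of makespan at most $C^*(I')+n\epsilon'\le C^*(I)+\epsilon\cdot\tfrac{n}{2B}\le(1+\epsilon)C^*(I)$.

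The step I expect to be the crux is obtaining a polynomial dynamic program at all. The completion times produced by (\ref{eq1}) are real numbers and the recurrence has memory depth $B$, so a naive state — for instance, only the last completion time — loses the information needed for the look‑back term, and a hand‑chosen discretization of the time axis would inject a second, uncontrolled rounding error. The resolution is the pair of observations above: keep exactly the last $B$ completion times, and choose $\epsilon'$ with integer reciprocal so that (\ref{eq1}) stays on the grid $\epsilon'\mathbb Z\cap[0,2n]$, which makes the dynamic program simultaneously exact on $I'$ and of polynomial size. Everything else — the monotonicity of (\ref{eq1}) and the a priori bound $C^*\ge n/(2B)$ that converts the additive error $n\epsilon'$ into the multiplicative factor $1+\epsilon$ — is routine.
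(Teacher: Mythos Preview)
Your proof is correct and takes a genuinely different route from the paper's. The paper rounds execution times \emph{up} geometrically to values of the form $\epsilon(1+\epsilon)^k$, then separately rounds the idle times of a near-optimal schedule the same way, so that a ``regular'' solution exists with all idle times in a fixed set of $\tilde O(1/\epsilon)$ values; its dynamic program tracks the multiset of remaining job types together with the last $B$ (idle-time, job-length) pairs. You instead round \emph{down} to a uniform grid $\epsilon'\mathbb Z$ with $1/\epsilon'\in\mathbb Z$ and exploit the observation that the recurrence~(\ref{eq1}) then keeps every completion time on $\epsilon'\mathbb Z\cap[0,2n]$; this makes a second discretization of idle times unnecessary and lets the dynamic program store the last $B$ actual completion times directly. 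Your argument is more elementary --- the two monotonicity facts about~(\ref{eq1}) cleanly replace the paper's Lemmas~\ref{le:1}--\ref{le:2}, and no separate ``regular solution'' layer is needed --- while the paper's geometric rounding buys somewhat fewer job types ($\tilde O(1/\epsilon)$ versus your $O(B/\epsilon)$), though for constant $B$ the running times are comparable. One tiny slip: in your initialization ``entries of index $\le 0$ are read as $-\infty$'' you need $C_{[0]}=0$ (and only indices $<0$ set to $-\infty$), or else the first transition yields $C_{[1]}=-\infty$; this does not affect the substance.
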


Let $\epsilon>0$ be an arbitrarily small constant. Let $OPT$ be the makespan of the optimal solution for the given instance $I$, whereas $OPT\geq 1$. We modify instance $I$ in the following way. For any job $S_i$, if its execution time $s_i\leq \epsilon$, we round it up to $\epsilon$. If $\epsilon<s_i\le 1$, we round it up to the nearest value of the form $\epsilon(1+\epsilon)^k$ where $k=1,2,\cdots, \lceil\frac{\log(1/\epsilon)}{\log(1+\epsilon)}\rceil$. Denote by $S_i'$ the job with rounded execution time $s_i'$ and let $I'$ be the modified instance that consists of $S_i'$. Note that there are at most $\tau=O(1/\epsilon\cdot\log (1/\epsilon))=\tilde{O}(1/\epsilon)$ different execution times in $I'$.

We have the following observation.

\begin{lemma}\label{le:1}
There exists a feasible solution for $I'$ whose makespan is at most $OPT(1+O(\epsilon))$.
\end{lemma}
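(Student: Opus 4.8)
The plan is to exhibit the required solution for $I'$ directly from an optimal solution of $I$ and to control the error introduced by the rounding. Fix an optimal permutation $T^*=(S_{[1]},\dots,S_{[n]})$ for $I$, so that greedy placement via $(\ref{eq1})$ gives makespan $C^{T^*}(I)=OPT$. Apply the \emph{same} permutation to the rounded jobs $S_{[1]}',\dots,S_{[n]}'$ and again place them greedily according to $(\ref{eq1})$. This is automatically feasible for $I'$: from $(\ref{eq1})$ one has $C_{[i+B]}'\ge C_{[i]}'+1+s_{[i+B]}'$, hence $d(i:i+B)=C_{[i+B]}'-s_{[i+B]}'-C_{[i]}'\ge 1$, so the criterion of Lemma~\ref{prop0} holds. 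It remains to bound the makespan $C^{T^*}(I')=C_{[n]}'$ of this schedule.

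Write $s_{[j]}'=s_{[j]}+\delta_j$ with $\delta_j\ge 0$, and put $\Delta_i=\sum_{j=1}^i\delta_j$, a non-decreasing sequence. The key claim is that under a fixed permutation $C_{[i]}'\le C_{[i]}+\Delta_i$ for all $i$, which is proved by induction on $i$ using $(\ref{eq1})$: for $i\le B$ it is an equality, and for $i>B$,
\begin{align*}
C_{[i]}' &= \max\{C_{[i-B]}'+1,\,C_{[i-1]}'\}+s_{[i]}' \\
&\le \max\{C_{[i-B]}+\Delta_{i-B}+1,\,C_{[i-1]}+\Delta_{i-1}\}+s_{[i]}+\delta_i \\
&\le \max\{C_{[i-B]}+1,\,C_{[i-1]}\}+\Delta_{i-1}+s_{[i]}+\delta_i = C_{[i]}+\Delta_i,
\end{align*}
using $\Delta_{i-B}\le\Delta_{i-1}$ and $\Delta_{i-1}+\delta_i=\Delta_i$. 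Thus $C^{T^*}(I')\le OPT+\Delta_n$. Next, the rounding rule forces $\delta_j\le\epsilon$ for every $j$: if $s_j\le\epsilon$ then $\delta_j=\epsilon-s_j\le\epsilon$, and if $\epsilon<s_j\le 1$ then $s_j'\le(1+\epsilon)s_j$, so $\delta_j\le\epsilon s_j\le\epsilon$. Hence $\Delta_n\le n\epsilon$.

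Finally I bound $n$ in terms of $OPT$. Since $s_{[i]}\ge 0$, $(\ref{eq1})$ gives $C_{[i]}\ge C_{[i-B]}+1$ for $i>B$; iterating this downward from $i=n$ and using $C_{[i]}\ge 0$ yields $OPT=C_{[n]}\ge\lceil n/B\rceil-1\ge n/B-1$, i.e.\ $n\le B(OPT+1)$. Combining the three estimates and using $OPT\ge 1$,
\[
C^{T^*}(I') \le OPT+n\epsilon \le OPT+B(OPT+1)\epsilon \le OPT\bigl(1+2B\epsilon\bigr)=OPT\bigl(1+O(\epsilon)\bigr),
\]
as $B$ is constant, which proves the lemma. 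The only genuinely non-obvious point is that the per-job rounding errors could a priori accumulate over all $n$ jobs; the argument hinges on the observation $n=O(OPT)$ forced by the $B$-constraint, while everything else is a routine induction on $(\ref{eq1})$ together with elementary properties of prefix sums and of the rounding.
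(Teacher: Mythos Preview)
Your proof is correct and follows essentially the same approach as the paper: keep the optimal permutation for $I$, observe feasibility for $I'$, and bound the makespan increase by $\sum_j\delta_j\le n\epsilon\le 2B\epsilon\cdot OPT$ using $n\le B(OPT+1)$. The only cosmetic differences are that the paper freezes the idle times of the optimal schedule (rather than re-running the greedy recursion~(\ref{eq1}) and bounding $C_{[i]}'$ by induction) and cites the inequality $OPT\ge (n-B)/B$ from~\cite{ZYCC16} instead of deriving it from~(\ref{eq1}) as you do.
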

\begin{proof}
Consider the optimal solution $Sol$ for $I$, whose makespan is $OPT$. Suppose the permutation of jobs is $T=(S_{[1]},S_{[2]},\cdots,S_{[n]})$. We replace each job $S_{[i]}$ with $S_{[i]}'$ while keeping the order of the jobs as well as the idle time between each $S_{[i]}$ and $S_{[i+1]}$ intact. Let $Sol'$ be the modified solution. We claim that, the following two statements are true:
\begin{itemize}
\item[(a).] The modified solution $Sol'$ is a feasible solution for $I'$.
\item[(b).] The modified solution $Sol'$ has a makespan of $OPT(1+O(\epsilon))$.
\end{itemize}
If both statements are true, the lemma follows directly.
	
We prove the first statement (a). Suppose on the contrary that it is false, then there exists some interval $[x,x+1)$ that intersects more than $B$ jobs. By Lemma \ref{prop0}, there exists $1\leq i \leq n-B$ such that $d'(i:i+B)<1$, that is,
$$d'(i:i+B)=d(i)+\sum_{j=i+1}^{i+B-1}(s'_{[j]}+d(j))<1.$$
Since $s_{[j]}'\geq s_{[j]}$ for any $j$, it follows that
$$d(i:i+B)=d(i)+\sum_{j=i+1}^{i+B-1}(s_{[j]}+d(j))\leq d(i)+\sum_{j=i+1}^{i+B-1}(s'_{[j]}+d(j))=d'(i:i+B)<1,$$
implying that the solution $Sol$ is not feasible for $I$, which is a contradiction.
	
We prove the second statement (b). In \cite{ZYCC16}, the authors have shown that $OPT\geq \max\{\sum_{i=1}^n s_i, \frac{n-B}{B}\}$, thus it suffices to show that $\sum_{i=1}^n s_i'\leq \sum_{i=1}^n s_i+OPT\cdot O(\epsilon)$. Obviously we have
$$\sum_{i:s_i>\epsilon} s_i'\le (1+\epsilon)\sum_{i:s_i>\epsilon} s_i\le \sum_{i:s_i>\epsilon} s_i+\epsilon OPT.$$
Consider the jobs with $s_i\le \epsilon$, whose overall length is no more than $n\epsilon$. Clearly we have
$$\sum_{i:s_i\leq\epsilon} s_{i'}\leq n\epsilon\leq (OPT+1)B\cdot\epsilon\leq  OPT\cdot 2B\epsilon\leq OPT\cdot O(\epsilon). $$
\end{proof}

Let $Sol'$ be the feasible solution for $I'$ satisfying Lemma~\ref{le:1}. Notice that the idle time between any two adjacent jobs is at most $1$. We further modify $Sol'$ as follows. If the idle time $d(i)$ between $S'_{[i]}$ and $S'_{[i+1]}$ is no more than $\epsilon$, round it up to $\epsilon$. Otherwise round it up to the nearest value of the form $\epsilon(1+\epsilon)^k$ where $k=1,2,\cdots, \lceil\frac{\log(1/\epsilon)}{\log(1+\epsilon)}\rceil$. Clearly, there are also $\tau=\tilde{O}(1/\epsilon)$ different idle times. Let $Sol''$ be the modified solution, we have the following observation.

\begin{lemma}\label{le:2}
The makespan of $Sol''$ is $OPT(1+O(\epsilon))$.
\end{lemma}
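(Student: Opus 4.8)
The plan is to bound the makespan increase caused by rounding up the idle times, using the same idle‑time accounting that underlies Lemma~\ref{le:1}. First I would recall that the makespan of any feasible solution obtained by the left‑shifting procedure of~(\ref{eq1}) equals the total execution time of all jobs plus the total idle time accumulated in the schedule, i.e.\ $C=\sum_{i=1}^n s_i' + \sum_{i=1}^{n-1} d(i)$ for $Sol'$ and similarly for $Sol''$ with the rounded idle times $d''(i)$. Since the jobs themselves are unchanged between $Sol'$ and $Sol''$, the only difference in makespan is $\sum_{i=1}^{n-1}(d''(i)-d(i))$, so it suffices to show this quantity is $OPT\cdot O(\epsilon)$. (One must first note that $Sol''$ remains feasible: rounding each idle time \emph{up} can only increase every $d(i:i+B)$, so Lemma~\ref{prop0} is still satisfied — this is the exact mirror of statement (a) in the proof of Lemma~\ref{le:1}.)

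Next I would split the idle times into the small ones ($d(i)\le\epsilon$) and the large ones ($\epsilon<d(i)\le 1$). For a small idle time, $d''(i)-d(i)\le\epsilon$, and there are at most $n-1$ of them, contributing at most $(n-1)\epsilon$. For a large idle time, $d''(i)\le(1+\epsilon)d(i)$, so $d''(i)-d(i)\le\epsilon\, d(i)\le\epsilon\sum_{i=1}^{n-1}d(i)$ summed over all such $i$; and $\sum_i d(i)$ is at most the makespan of $Sol'$, which by Lemma~\ref{le:1} is $OPT(1+O(\epsilon))$. Thus the large idle times contribute at most $\epsilon\cdot OPT(1+O(\epsilon))=OPT\cdot O(\epsilon)$. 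For the small ones, I would invoke the bound $OPT\ge\frac{n-B}{B}$ from~\cite{ZYCC16} (already used in the proof of Lemma~\ref{le:1}), giving $n\le B(OPT+1)\le 2B\cdot OPT$ since $OPT\ge 1$, hence $(n-1)\epsilon\le 2B\epsilon\cdot OPT=OPT\cdot O(\epsilon)$ for constant $B$.

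Putting the two pieces together, the makespan of $Sol''$ is at most the makespan of $Sol'$ plus $OPT\cdot O(\epsilon)$, which by Lemma~\ref{le:1} is $OPT(1+O(\epsilon))+OPT\cdot O(\epsilon)=OPT(1+O(\epsilon))$, as claimed. I expect the only subtle point to be making the "makespan $=$ execution time $+$ idle time" decomposition precise and confirming that the left‑shifting dynamics of~(\ref{eq1}) for $Sol''$ really does just add the rounded idle times without any further compression — equivalently, that keeping the prescribed idle times $d''(i)$ between consecutive jobs yields a feasible schedule whose makespan is exactly that sum. This follows because feasibility (Lemma~\ref{prop0}) depends only on the values $d(i)$ and $s'_{[i]}$, which we control directly, so no hidden interaction can shrink the schedule; the rest is the elementary summation above.
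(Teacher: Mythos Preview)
Your argument is correct and is essentially the same as the paper's: the paper only sketches the proof by observing that one may swap the roles of jobs and idle times and repeat the small/large split from Lemma~\ref{le:1}, which is exactly the computation you carry out explicitly. Your one hesitation about the left-shifting rule~(\ref{eq1}) is unnecessary---$Sol''$ is defined by prescribing the rounded idle times directly, so its makespan is simply $\sum_i s_i'+\sum_i d''(i)$ and feasibility follows from Lemma~\ref{prop0} since every $d''(i)\ge d(i)$.
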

The proof is essentially the same as that of Lemma~\ref{le:1}. Notice that no interval of $[x,x+1)$ can intersect more than $B+1$ idle times, thus we can view the idle time between two jobs as a job, and view the execution time of a job as an idle time. Applying the same argument as Lemma~\ref{le:1} suffices.

We call a solution for $I'$ as a regular solution if it satisfies that the idle time between any two jobs is of the form $\epsilon(1+\epsilon)^k$. Lemma~\ref{le:2} implies that the optimal regular solution has a makespan at most $OPT(1+O(\epsilon))$. In what follows, we provide a dynamic programming algorithm to find out the optimal regular solution for $I'$, whereas Theorem~\ref{thm:ptas} follows.

Denote by $V'$ the set of execution times and $U'$ the set of idle times between jobs. Recall that either $V'$ or $U'$ contains at most $\tau=\tilde{O}(1/\epsilon)$ different elements. Consider a $(\tau+2B)$-vector $z=(n_1,n_2,\cdots,n_{\tau},u_1,v_1,u_2,v_2\cdots,u_{B},v_B)$. We associate with the above vector the makespan of the optimal regular solution for the subproblem such that there are $n_i$ jobs of execution time $\epsilon(1+\epsilon)^{i-1}$, and the last $B$ jobs are scheduled as follows: it starts with an idle time of length $u_1$, followed by a job of execution time $v_1$, then an idle time of length $u_2$, followed by a job of execution time $v_2$, $\cdots$, the last job is of execution time $v_B$. A vector $z$ is called feasible if $u_1+v_1+u_2+v_2+\cdots+v_{B-1}+u_B\geq 1$, in which we mean that the generated schedule is feasible by Lemma \ref{prop0}.

Let $f(z)$ be the value associated with the vector $z$, then it could be calculated iteratively as follows.

\textbf{Initialization:} If $s_i\in\{v_1, v_2, \cdots, v_B\}$ but $n_i=0$ or $n_i<0$ for some $i$ or $\sum_{i=1}^\tau n_i>n$ or $z$ is infeasible, then $f(z)=+\infty$. Otherwise if $\sum_{i=1}^\tau n_i=B$, then $f(z)=\min_{u_i\in U'}\sum_{i=1}^B (u_i+v_i)$.

\textbf{Recursive function:} Suppose $z$ is feasible, $v_B=\epsilon(1+\epsilon)^{i-1}$ and $\sum_{i=1}^\tau n_i>B$. For any $1\leq x,y\leq \tau$, define $z_{xy} = (n_1,\cdots,n_{i-1},n_i-1,n_{i+1},\cdots,n_{\tau},\epsilon(1+\epsilon)^{x-1},\epsilon(1+\epsilon)^{y-1},u_1,v_1,\cdots,u_{B-1},v_{B-1})$.
$$
f(z)=\min_{1\leq x,y\leq \tau}\{f(z_{xy})+u_B+v_B\}.
$$

\textbf{Objective value:} $f^*=\min_{u_i\in U', v_i\in V'}\{f(z)|I'\ \textrm{consists of}\ n_i\ \textrm{jobs of length}\ \epsilon(1+\epsilon)^{i-1}\}$.

Since there are at most $n^{\tau}\cdot\tau^{2B}=n^{\tilde{O}(1/\epsilon)}\cdot \tilde O(1/\epsilon^{2B})$ different kinds of vectors, the time complexity of the above dynamical programming could be bounded by $n^{\tilde{O}(1/\epsilon)}\cdot \tilde O(1/\epsilon^{2B})$.

\section{Conclusion}
We studied the single-processor scheduling problem with time restrictions. In previous work \cite{BCG16,ZYCC16}, asymptotically optimal permutations have been acquired for $B=2$. Now this paper sends a proof that the problem becomes NP-hard even when $B=2$. It makes the problem fairly interesting. Though there exists a PTAS for any constant $B$, it remains open whether the problem is strongly NP-hard or not for a constant $B\geq 3$ or for an input $B$.

%\end{CJK*}
\end{document}